\documentclass[ninept]{article}
\usepackage{spconf,amsmath,amsfonts, graphicx,hyperref}
\usepackage{amsthm}
\usepackage{tikz}

\theoremstyle{definition}
\newtheorem{lemma}{Lemma}

\theoremstyle{definition}

\usepackage{color}

\usepackage{soul}

\sloppy


\title{Hybrid channel estimation with Quantized Phase Feedback for Over-the-Air Computation}
%
\name{Martin Dahl and Erik G. Larsson
	\thanks{This work was supported in part by ELLIIT, the Swedish Research Council (VR), and the Knut and Alice Wallenberg (KAW) Foundation. Martin Dahl is affiliated with the Wallenberg Autonomous Systems Program (WASP) Graduate School.}}
\address{Department of Electrical Engineering (ISY), Linköping University, 581 83 Linköping, Sweden}
%
%
%

\ninept

\begin{document}
%
\maketitle
\begin{abstract}
To reduce the signaling overhead of over-the-air computation, a hybrid channel estimation scheme is proposed, where reciprocity-based and feedback-based channel estimation are combined. In particular, the impact of quantized phase-feedback is studied while the amplitude is assumed estimated exactly. The scheme enables selecting the estimation precision of amplitude and phase separately, depending on the importance of each. Two variants of the scheme are proposed: As shown through simulations and theory, the second variant with reciprocity-based estimation of the channel phase, and optimal quantization of phase feedback, can outperform the first variant estimating the phase by feedback only. 
\end{abstract}
\begin{keywords}
Over-the-air computation, channel estimation, quantization, reciprocity calibration, feedback
\end{keywords}
\section{Introduction}
\label{sec:intro}
Over-the-Air computation (OAC), closely related to distributed beamforming \cite{nanzer2021distributed, sahin2025distributed_phase}, has been proposed as a method to aggregate distributed values over a wireless network \cite{csahin2023survey}. Traditionally, the devices may transmit their values separate in time, frequency and space without interference to an access-point (AP), where the individual values can be decoded and the sum computed. Conversely, in OAC all devices transmit simultaneously on the same frequency and spatial resource, exploiting superposition to compute the sum. It follows that the main applications of OAC are where the interest is the sum of values from distributed devices, rather than the individual values. The most promising such application is federated learning \cite{csahin2023survey, kairouz2021advances}. 

Theoretically, OAC has a resource use of $\mathcal{O}(1)$ compared to the traditional $\mathcal{O}(K)$ for $K$ devices in the network. However, while OAC appears more efficient, repeated estimation of uplink (UL) channel state information (CSI) at all devices is necessary because of mobility and phase drift of the local oscillators. The CSI, in particular its phase, is required for the transmitted values to add up coherently at the AP. Previous work commonly assume CSI is known exactly \cite{zhu2019broadband}, but also investigate OAC with inexact CSI knowledge \cite{evgenidis2023over}. Additionally, a non-coherent OAC not requiring channel phase has been studied \cite{lee2024performance}, relying on the second moment of the channel amplitude, which varies slower than the instantaneous phase making it easier to estimate. For coherent OAC, the focus herein, the UL CSI can in general be estimated using one of two options:

The first option is to have the devices transmit UL pilots, to which the AP responds with feedback. This method is simple and straightforward, however, the communication cost scales linearly with the number of devices counteracting the OAC benefit of reducing such linear growth. This approach was studied for OAC in \cite{xie2023optimal} and \cite{ang2019robust}.

A second option is to calibrate\footnote{As noted by \cite{larsson2024massive}, alternative terms in the literature for "calibration" of phase are "synchronization" and "alignment".} the devices in phase for joint reciprocity through bi-directional signaling between each device and the AP. Such calibration enables estimating the UL CSI by a single broadcast downlink (DL) pilot from the AP \cite{shepard2012argos}. A similar setting is distributed multiple-input multiple-output (MIMO) where APs may calibrate over-the-air for joint beamforming to a user terminal \cite{ganesan2023beamsync}: In OAC the APs are exchanged for the OAC devices and the user terminal for the OAC AP.\footnote{In distributed MIMO there may also be a wired fronthaul distributing a common calibration signal, which is not the case for the devices in OAC.} Channel estimation with reciprocity has been considered in previous work on OAC, with \cite{guo2021over} and without calibration \cite{xie2023optimal}. However, in practice phase calibration is necessary for reciprocity based channel estimation: While wireless channels (antenna-antenna) are reciprocal, the effect of device and AP hardware makes the total channel (hardware-hardware) non-reciprocal  \cite{nissel2022correctly}. This is caused by unknown time-varying phase- and amplitude-factors in the hardware \cite{larsson2024massive, shepard2012argos}. The main time-varying factor is a drift in phase of the local oscillators \cite{nissel2022correctly}, primarily caused by carrier frequency offset \cite{sahin2025distributed_phase} and phase noise \cite{piemontese2024discrete}. Consequently, the devices must periodically re-calibrate to maintain the joint reciprocity \cite{dahl2024over}

\textbf{Contribution: }A hybrid channel estimation scheme for coherent OAC is proposed, exploiting both reciprocity and feedback based channel estimation with quantization of the phase feedback.
Two variants of the scheme are studied, where analytically derived OAC performance is provided for the first variant, and simulated performance for the second. Much work exists on quantization of phase \cite{smith2003comparison, dowhuszko2014performance} outside of the OAC context, as well as for OAC with joint quantization of amplitude and phase \cite{tsinos2023over}. The novelty lies in exploiting reciprocity \textit{and} feedback, as well as separate quantization of the phase feedback for OAC. Apart from \cite{lee2024performance} and our own previous work \cite{dahl2024over} which only considered reciprocity based channel estimation without quantization, the isolated effect of phase errors on OAC has not been studied before.

\section{System Objective and Model}
\label{sec:format}

Consider $K$ devices, each with a value $v_k\in\mathbb{C}$ such that $\mathbb{E}[v_k]=0$, $\mathbb{E}\left[|v_k|^2\right]=1$, i.i.d. $\forall k$. The system objective is to compute an estimate $\widehat{v}$ of the sum
\begin{equation}
	v \equiv \sum_{k=1}^{K}v_k,
\end{equation}   
minimizing the mean-square error (MSE)
\begin{equation}\label{eq:MSE_goal}
	\mathbb{E}\left[|v-\widehat{v}|^2\right].
\end{equation}
To this end, each device transmits a symbol $x_k=a_kv_k\in\mathbb{C}$ subject to an average power constraint $\mathbb{E}\left[|x_k|^2\right]=P>0$, where $a_k\in\mathbb{C}$ is a precoding coefficient and $P$ the maximum mean power. Then, the AP measures
\begin{equation}\label{eq:receive_OAC}
	y = \sum_{k=1}^{K}g_kx_k + n\in\mathbb{C},
\end{equation}
where $g_k\in\mathbb{C}, \text{ }\forall k$ is the CSI of the UL channel and ${n\in\mathbb{C}}$ thermal noise. A time-index is omitted for notational ease $\left(y\equiv y^t\right)$, however, the channel is block-fading and admits new i.i.d. instances of $g_k$ and $n$ for $t'\neq t$. The estimator $\widehat{v}$ is heuristically defined as
\begin{equation}
	\widehat{v} = by,
\end{equation}
where $b\in\mathbb{C}$ is the receive coefficient.   

\section{Precoding and Power Control}
\label{sec:pagestyle}
To select $a_k$ and $b$, the ``static-channel'' power-control scheme from \cite{cao2020optimized} is applied with modified assumptions: Consider ${g_k = |g_k|e^{j\phi_k}}$, where $|g_k|$ is the channel amplitude and $\phi_k\equiv \angle g_k$ the channel phase. Herein, the devices obtain 
channel estimates as
\begin{equation}
	\widehat{g_k}\equiv |g_k| e^{j\widehat{\phi_k}},
\end{equation}
where the amplitude of the channel is assumed to be estimated exactly $\left(|\widehat{g_k}|=|g_k|\right)$, but not the phase $(\widehat{\phi_k}\neq \phi_k)$, assuming a high AP power budget ($\gg P$) and stable hardware amplitude-factors. In contrast, \cite{cao2020optimized} assumes both $|g_k|$ and $\phi_k$ to be estimated exactly, which results in their power control scheme being agnostic to phase errors.  Furthermore,  \cite{cao2020optimized} only applies channel inversion at device $k$ if $|g_k|$ is sufficiently large, selecting $a_k$ and $b$ as follows
\begin{equation}\label{eq:power_control_coeff}
	a_k = \frac{e^{-j\widehat{\phi_k}}}{|g_k|}, b = 1, \forall k.
\end{equation}
To focus on phase errors, we assume a high channel power such that channel inversion is very likely, resulting approximately in the following estimate
\begin{equation}\label{eq:OAC_estimate}
	\widehat{v} = \sum_{k=1}^{K}v_ke^{j\left(\phi_k-\widehat{\phi_k}\right)} + n.
\end{equation}
Note that with (\ref{eq:power_control_coeff}) $\widehat{v}$ is not necessarily unbiased conditioned on $v$, but minimizes its MSE (\ref{eq:MSE_goal}) in the case of exact channel phase knowledge. 

\section{Quantization, Calibration and phase noise}
The following sections give a review on important concepts to understand the proposed scheme.

\subsection{Uniform and Lloyd-Max Quantization}
Let $Q(x)$ be an $N$-bit quantizer with $2^N$ quantization levels $q\in\Phi, (|\Phi|=2^N)$ as follows
\begin{equation}
	Q(x) = \underset{q\in\Phi}{\text{ arg min }} |q-x|.
\end{equation}
For the \textit{uniform} quantizer (UQ) the levels are selected as
\begin{equation}\label{eq:UQ_levels}
	\Phi = \left\{\frac{i\pi}{2^{N-1}}\bigg | i\in\left\{0,\dots,2^N-1\right\}\right\},
\end{equation}
for angles evenly spread on the unit-circle, starting at $0$\footnote{For random angles on the unit circle, the mean square quantization error is equal regardless of where you start, but $0$ is intuitive for the problem.}. For the \textit{Lloyd-Max} quantizer (LMQ) \cite{lloyd1982least}, parameterized by $\alpha$, the levels are as follows\footnote{The zero-mean, $\alpha$-variance normal distribution is denoted by $\mathcal{N}(0,\alpha)$, $\mathcal{CN}(0,\alpha)\equiv\mathcal{N}(0,\alpha/2) + j\mathcal{N}(0,\alpha/2)$ is the complex normal.}
\begin{equation}
	\Phi_\alpha = \left\{q_i\bigg| \underset{\{q_i\}}{\text{ min }}\mathbb{E}_{x\sim\mathcal{N}(0,\alpha)}\left[\underset{q\in\{q_i\}}{\text{ min }}(x-q)^2\right]\right\}.
\end{equation}
That is, the levels of the LMQ are selected to minimize the mean square quantization error for a given distribution, in this case for Gaussian data. The LMQ levels are generally only available as numerical approximates, visualized in Figure \ref{fig:quantizers_unit_circle} together with levels for the uniform quantizer when $N=1,2$. This quantizer only makes sense for low $\alpha$ $(\leq 1)$, as it does not consider the phase $\text{mod }\pi$. 

\subsection{Channel Estimation with Calibrated Reciprocity}\label{sec:reciprocity_calibration}
As discussed, wireless channels are not reciprocal in practice. The UL channel $g_{k\rightarrow \text{AP}}\equiv g_k$ from device $k$ to the AP is generally not equal to the DL channel $g_{\text{AP}\rightarrow k}$. This can be modeled with multiplicative coefficients $t_k, r_k, t_\text{AP}, r_\text{AP}\in\mathbb{C}$, giving the total UL and DL channels $g_{k\rightarrow \text{AP}} = t_kh_kr_\text{AP}$ and $\text{ }g_{\text{AP}\rightarrow k} = t_\text{AP}h_kr_k$
where $h_k\in\mathbb{C}$ is the reciprocal channel between the antennas \cite{nissel2022correctly}. To calibrate, each device transmits a pilot to the AP, which measures $g_{k\rightarrow \text{AP}}$, ignoring noise. The AP responds with a precoded DL pilot to each device
\begin{equation}\label{eq:calibration_coeff_meas}
	y_k = \frac{g_{\text{AP}\rightarrow k}}{g_{k\rightarrow \text{AP}}} = \frac{r_k}{t_k}\frac{t_\text{AP}}{r_\text{AP}}\equiv \frac{1}{c_k}.
\end{equation}
Then, with a single broadcast pilot from the AP over a new channel instance, say $h_k'\neq h_k$, the devices can estimate the UL $g_{k\rightarrow \text{AP}}'$ by multiplying the DL with the calibration coefficient $c_k$:
\begin{equation}\label{eq:reciprocity_channel_estimation}
	c_kg_{\text{AP}\rightarrow k}' = h_k't_k r_\text{AP} = g_{k\rightarrow \text{AP}}'.
\end{equation}

\subsection{Phase Drift from Wiener Phase Noise}
Relative to the amplitudes $|t_k|, |r_k|$ and the AP hardware $t_\text{AP}, r_\text{AP}$, the device hardware phases $\angle t_k, \angle r_k$ will drift significantly faster with time \cite{nissel2022correctly}. This drift is primarily caused by phase noise, which can be modeled as a Wiener process \cite{piemontese2024discrete} of Gaussian increments to the transmit chain $\angle t_k' = \angle t_k + \epsilon_k/2$, $\epsilon_k\sim\mathcal{N}(0,\alpha)$ and receive chain which shares the same oscillator $\angle r_k' = \angle r_k - \epsilon_k/2$ \cite{nissel2022correctly}, resulting in a rotated channel estimate 
\begin{equation}\label{eq:reciprocity_channel_estimation_phase_error}
	\widehat{g_{k\rightarrow \text{AP}}'} = g_{k\rightarrow \text{AP}}'e^{j\epsilon_k}.
\end{equation}
If a time $T$ has passed since calibration the increments will accumulate to a total phase offset distributed as $\mathcal{N}(0,T\alpha)$ such that (\ref{eq:reciprocity_channel_estimation_phase_error}) becomes increasingly uncertain with time. If $\alpha$ is sufficiently low, depending on the oscillator quality, calibration according to (\ref{eq:calibration_coeff_meas}, \ref{eq:reciprocity_channel_estimation}) can be done seldom ($T\gg 1$), enabling efficient estimation of the UL channel.
\section{Hybrid Channel Estimation}
The hybrid channel estimation scheme is split into two variants, A and B, summarized by Figures \ref{fig:tikz_variant_A} and \ref{fig:tikz_variant_B} respectively. Figure \ref{fig:tikz_estimation} summarizes the differences between A and B. The two variants are described in detail in the following sections.
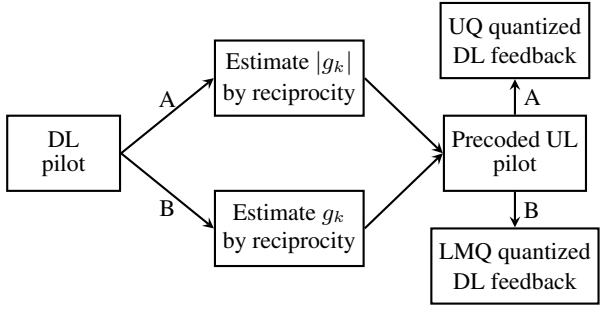
\begin{figure}[t!]
	\centering
	\begin{tikzpicture}[>=stealth, thick]
		
		\node[draw, rectangle, minimum width=1.5cm, minimum height=1.0cm] (A) at (0,0) {\shortstack{DL\\pilot}};
		

		\node[draw, rectangle, minimum width=1.5cm, minimum height=1.0cm] (B) at (3,1) {\shortstack{Estimate $|g_k|$\\by reciprocity}};
		\node[draw, rectangle, minimum width=1.5cm, minimum height=1.0cm] (C) at (3,-1) {\shortstack{Estimate $g_k$\\by reciprocity}};
		
		\node[draw, rectangle, minimum width=1.5cm, minimum height=1.0cm] (D) at (6,0) {\shortstack{Precoded UL\\pilot}};
		
		\node[draw, rectangle, minimum width=1.5cm, minimum height=1.0cm] (E) at (6.00,1.5) {\shortstack{UQ quantized\\DL feedback}};
		
		\node[draw, rectangle, minimum width=1.5cm, minimum height=1.0cm] (F) at (6.00,-1.5) {\shortstack{LMQ quantized\\DL feedback}};
		

		\draw[->] (A.east) -- (B.west) node[midway, above]  {A};
		\draw[->] (A.east) -- (C.west) node[midway, below]  {B};
		
		\draw[->] (B.east) -- (D.west) node[midway, above]  {};
		\draw[->] (C.east) -- (D.west) node[midway, above]  {};
		
		\draw[->] (D.north) -- (E.south) node[midway, right]  {A};
		\draw[->] (D.south) -- (F.north) node[midway, right]  {B};
		
		
		
	\end{tikzpicture}
	\caption{Channel estimation corresponding to steps 2,3,4 in Variant A and B.}
	\label{fig:tikz_estimation}
\end{figure}

\subsection{Variant A: Feedback-based phase estimation}
In Variant A, the devices estimate the UL channel amplitude from a DL pilot, but the phase from pilot-based quantized feedback. A detailed description follows:
\begin{enumerate}
	\item Devices and AP calibrate the hardware in amplitude but not phase (estimate $|c_k|$ but not $\angle c_k$), such that only $|g_k|$ can be estimated from reciprocity. As the amplitude is much more stable than the oscillator phase, this can be done rarely and to a high precision \cite{nissel2022correctly}. 
	\item AP broadcasts a DL pilot simultaneously to all users, from which devices estimate $|g_k|$ exactly. The devices now hold channel estimates ${\widetilde{g_k} \equiv |g_k|}$. 
	\item Devices transmit orthogonal UL pilots precoded with $\widetilde{g_k}$, from which the AP estimates the channel phase error $e_k\equiv\phi_k$ exactly.
	\item AP quantizes $e_k$ uniformly with (\ref{eq:UQ_levels}) and transmits $Q(e_k)$ error-free to each device. Devices update their channel estimate to the final $\widehat{g_k} = |g_k|e^{jQ(\phi_k)}$.
\end{enumerate}

\subsection{Variant B: Reciprocity-based phase estimation}
In Variant B, the devices estimate the UL amplitude \textit{and} the phase from the DL pilot, the pilot-based feedback can then be optimized according to the statistics of the phase noise. Compared to Variant A which is unaffected by phase noise, the performance of Variant B will depend on the amount of time $t\geq 0$ since calibration. As such, let $g_k^t$ with amplitude $|g_k^t|$ and phase $\phi_k^t$ be an i.i.d. instance of the channel in (\ref{eq:receive_OAC}) at time $t$. A detailed description follows: 
\begin{enumerate}
	\item Devices and AP calibrate for joint reciprocity by estimating $c_k$ at time $t=0$, as described in Section \ref{sec:reciprocity_calibration}. 
	\item AP broadcasts a DL pilot simultaneously to all users, from which devices estimate $g_k^t$ using (\ref{eq:reciprocity_channel_estimation}), giving an exact estimate of $|g_k^t|$, but an inexact phase estimate $\widetilde{\phi_k^{t}}$ as follows
	\begin{equation}
		\begin{split}
			&\widetilde{\phi_k^{t}} = \phi_k^t + e_k^t + \underbrace{\sum_{\tau=1}^{t-1}\left(e_k^\tau - Q(e_k^\tau)\right)}_{\equiv E^{t-1}},\\
			&e_k^t\sim\mathcal{N}(0,\alpha), \text{i.i.d. across $t,k$},
		\end{split}
	\end{equation}
	where $E^{t-1}$ is the cumulative quantization error from previous estimation rounds, unknown to each device. Because of phase noise, $\widetilde{\phi_k^{t}}$ is likely too inaccurate for coherent aggregation, but constitutes an intermediate  estimate $\widetilde{g_k}^t \equiv |g_k^t|e^{j\widetilde{\phi_k^{t}}}$.   
	\item Devices transmit orthogonal UL pilots precoded by $\widetilde{g_k}^t$, to which the AP observes
	\begin{equation}\label{eq:phase_error_B}
		\phi_k^t - \widetilde{\phi_k^t} = -e_k^t - E^{t-1},
	\end{equation}
	where compared to the devices, the AP knows $E^{t-1}$ which can be subtracted from (\ref{eq:phase_error_B}).
	\item As $e_k^t$ is Gaussian, it may be quantized optimally using a LMQ quantizer giving $Q(e_k^t)$, which is transmitted error-free to each device. The final estimate of $g_k^t$ at each device is ${\widehat{g_k}^t = |g_k^t|e^{j\widehat{\phi_k^t}}}$, where
	\begin{equation}
		\widehat{\phi_k^t} = \widetilde{\phi_k^t} - Q(e_k^t) = \phi_k^t +  E^{t}.
	\end{equation}
	\item The system increments $t$ by 1. If $t<T$ the system continues from step 2, otherwise the system re-calibrates by starting over from step 1, nulling $E^{t}$.
\end{enumerate}
The distribution of each element in $E^{t}$ is a mixture of translated and truncated Gaussians \cite{chopin2011fast}, 
making it difficult to characterize the distribution of $\text{exp}\left(j\left(\phi_k^t - \widehat{\phi_k^t}\right)\right)$ in closed form, which we have to leave to future work.

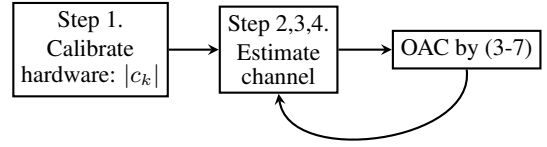
\begin{figure}[t!]
	\centering
	\begin{tikzpicture}[>=stealth, thick]
		\node[draw, rectangle, minimum width=1.5cm, minimum height=1.0cm] (A) at (0,0) {\shortstack{Step 1.\\ Calibrate\\hardware: $|c_k|$}};
		
		\node[draw, rectangle, minimum width=1.5cm, minimum height=1.0cm] (B) at (2.5,0) {\shortstack{Step 2,3,4.\\ Estimate\\channel}};
		
		\node[draw, rectangle, minimum width=1.5cm, minimum height=0.5cm] (C) at (5,0) {OAC by (\ref{eq:receive_OAC}-\ref{eq:OAC_estimate})};
		
		\draw[->] (A.east) -- (B.west) node[midway, above] {};
		
		\draw[->] (B.east) -- (C.west);
		
		\draw[->, bend left=90] (C.south) to (B.south);
		
	\end{tikzpicture}
	\caption{Variant A: As $|c_k|$ is very stable, the calibration is done only once, making Variant A independent of $t$. Steps 2,3,4 are summarized by Figure \ref{fig:tikz_estimation}.}
	\label{fig:tikz_variant_A}
\end{figure}
\begin{figure}[t!]
	\centering
	\begin{tikzpicture}[>=stealth, thick]
		\node[draw, rectangle, minimum width=1.5cm, minimum height=1.0cm] (A) at (0,0) {\shortstack{Step 1.\\ Calibrate\\hardware: $c_k$}};
		
		\node[draw, rectangle, minimum width=1.5cm, minimum height=1.0cm] (B) at (2.5,0) {\shortstack{Step 2,3,4.\\ Estimate\\channel}};
		
		\node[draw, rectangle, minimum width=1.5cm, minimum height=0.5cm] (C) at (5,0) {OAC by (\ref{eq:receive_OAC}-\ref{eq:OAC_estimate})};
		
		\node[draw, rectangle, minimum width=1.5cm, minimum height=0.5cm] (D) at (2.5,-2.0) {\shortstack{Step 5.\\ $t\geq T$?}};
		
		\draw[->] (A.east) -- (B.west) node[midway, above] {$t+1$};
		
		\draw[->] (B.east) -- (C.west);
		
		
		\draw[->] (D.north) -- (B.south) node[midway, right] {No};
		
		\draw[->] (C.south) |- (D.east) node[midway, above=5pt, right=0pt, left=5pt] {$t+1$};
		
		\draw[->] (D.west) -| (A.south) node[midway, above=5pt, right=10pt] {Yes};
		
		\draw[->] (D.west) -| (A.south) node[midway, below=5pt, right=10pt] {$t=0$};
		
	\end{tikzpicture}
	\caption{Variant B: The calibration is done periodically because of oscillator drift, making $t$ important. Steps 2,3,4 are summarized by Figure \ref{fig:tikz_estimation}.}
	\label{fig:tikz_variant_B}
\end{figure}
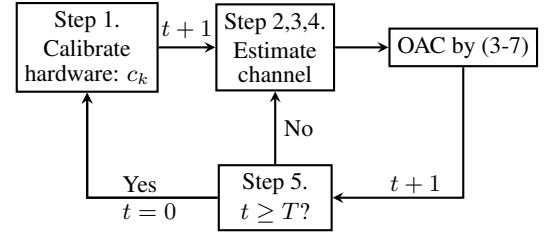

\section{Analytical result for Variant A}
\begin{lemma}[Variant A MSE]
	Let $\widehat{v}$ be given by OAC with Variant A channel estimation, where $n, g_k\sim\mathcal{CN}(0,1), \text{ i.i.d. } \forall k$, then
	\begin{equation}
		\mathbb{E}\left[|\widehat{v}-v|^2\right] = 2K\left(1 - \frac{2^N}{\pi}\text{sin}\left(\frac{\pi}{2^N}\right)\right) + 1.
	\end{equation}
\end{lemma}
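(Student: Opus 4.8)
The plan is to start from the explicit error expression and reduce everything to a single scalar expectation over the phase-quantization error. The lemma takes $\widehat{v}$ to be the Variant A estimate, i.e.\ (\ref{eq:OAC_estimate}) with $\widehat{\phi_k}=Q(\phi_k)$, so I would write $\widehat{v}-v=\sum_{k=1}^{K}v_k\bigl(e^{j\delta_k}-1\bigr)+n$, where $\delta_k\equiv\phi_k-Q(\phi_k)$ is the phase-quantization error at device $k$. The first substantive step is a distributional observation: since $g_k\sim\mathcal{CN}(0,1)$ is circularly symmetric, its phase $\phi_k$ is uniform on $[0,2\pi)$; because the $2^N$ levels in (\ref{eq:UQ_levels}) partition the circle into arcs of equal length $2\pi/2^N$, the error $\delta_k$ is uniform on $[-\pi/2^N,\pi/2^N)$, i.i.d.\ across $k$, and independent of the data $\{v_k\}$ and of the noise $n$.

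Next I would expand the squared modulus and exploit this independence/zero-mean structure to kill all cross terms. Because $n$ is zero-mean and independent of everything else, $\mathbb{E}\bigl[|\widehat{v}-v|^2\bigr]=\mathbb{E}\bigl[\,|\sum_k v_k(e^{j\delta_k}-1)|^2\,\bigr]+\mathbb{E}[|n|^2]$, and $\mathbb{E}[|n|^2]=1$. Expanding the first term into a double sum over indices $k,l$, every term with $k\neq l$ vanishes since $\mathbb{E}[v_k\bar v_l]=0$, leaving $\sum_k\mathbb{E}[|v_k|^2]\,\mathbb{E}[|e^{j\delta_k}-1|^2]=K\,\mathbb{E}[|e^{j\delta}-1|^2]$ with $\delta\sim\mathrm{Uniform}(-\pi/2^N,\pi/2^N)$, using $\mathbb{E}[|v_k|^2]=1$ and independence of $v_k$ from $\delta_k$.

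The remaining step is the scalar computation. Writing $|e^{j\delta}-1|^2=2-2\cos\delta$, I need $\mathbb{E}[\cos\delta]=\frac{2^{N}}{2\pi}\int_{-\pi/2^N}^{\pi/2^N}\cos\delta\,d\delta=\frac{2^N}{\pi}\sin\bigl(\pi/2^N\bigr)$, hence $\mathbb{E}[|e^{j\delta}-1|^2]=2\bigl(1-\frac{2^N}{\pi}\sin(\pi/2^N)\bigr)$. Combining with the noise term gives $\mathbb{E}[|\widehat{v}-v|^2]=2K\bigl(1-\frac{2^N}{\pi}\sin(\pi/2^N)\bigr)+1$, as claimed.

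I do not expect a serious obstacle. The only point needing care is the first step: verifying that the quantization error is \emph{exactly} uniform on one cell — which rests on the circular symmetry of $g_k$ making $\phi_k$ exactly uniform together with the levels being equispaced, so that wrap-around around the circle is harmless — and then keeping the independence bookkeeping straight so that the inter-device cross terms and the noise cross terms genuinely drop out. Everything after that is a one-line trigonometric integral.
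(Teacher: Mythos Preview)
Your argument is correct and follows essentially the same route as the paper: reduce $\widehat v-v$ to $\sum_k v_k(e^{j\delta_k}-1)+n$, use circular symmetry of $g_k$ to get $\delta_k\sim\mathrm{U}(-\pi/2^N,\pi/2^N)$, kill cross terms by independence/zero mean, and evaluate $\mathbb{E}[|e^{j\delta}-1|^2]=2-2\,\mathbb{E}[\cos\delta]$ via the same sinc integral. The only cosmetic difference is that the paper phrases the scalar computation as $\mathbb{E}[e^{j\delta}]$ rather than $\mathbb{E}[\cos\delta]$, which is the same quantity by symmetry.
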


\begin{proof}
	\begin{equation}
		\mathbb{E}\left[|\widehat{v}-v|^2\right] = \mathbb{E}\left[(\widehat{v}-v)^*(\widehat{v}-v)\right] = \sum_{k=1}^{K}\mathbb{E}\left[|e_k|^2\right] + 1,
	\end{equation}
	where $e_k = e^{j(\phi_k-Q(\phi_k))}- 1$. The distribution of $\phi_k-Q(\phi_k)$ is given by
	\begin{equation}
		\phi_k-Q(\phi_k)\sim\text{U}\left(-\frac{\pi}{2^N},\frac{\pi}{2^N}\right),
	\end{equation}
	which follows from $\phi_k\sim\text{U}(0, 2\pi)$ and that the quantized feedback constrains $\phi_k$ to an increasingly halved unit-circle. The mean of $e^{j(\phi_k-Q(\phi_k))}$, denoted by $\widetilde{\mathbb{E}}$, is computed by first principles:
	
	\begin{equation}
		\widetilde{\mathbb{E}} = \int_{-\frac{\pi}{2^N}}^{\frac{\pi}{2^N}}\frac{2^N}{2\pi}e^{jx} dx = \frac{2^N}{\pi}\text{sin}\left(\frac{\pi}{2^N}\right).  
	\end{equation}
	Then one has
	\begin{equation}
		\begin{split}
			&\mathbb{E}\left[|e_k|^2\right] = 2 - 2\widetilde{\mathbb{E}} = 2 - \frac{2^{N+1}}{\pi}\text{sin}\left(\frac{\pi}{2^N}\right),
		\end{split}
	\end{equation}
	and finally
	\begin{equation}
		\mathbb{E}\left[|\widehat{v}-v|^2\right] = 2K\left(1 - \frac{2^N}{\pi}\text{sin}\left(\frac{\pi}{2^N}\right)\right) + 1.
	\end{equation}
\end{proof}
\begin{figure}[t!]
	\centering
	\includegraphics[height=2.2in]{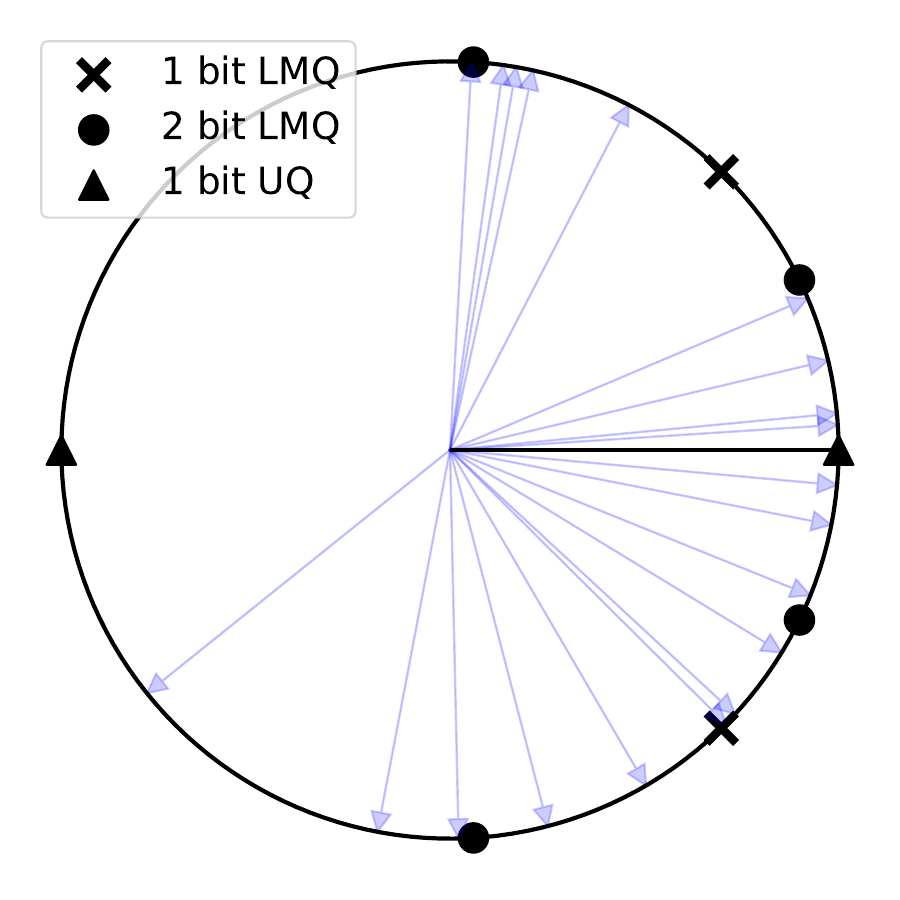}
	\caption{Lloyd-Max quantizer (LMQ) vs uniform quantizer (UQ) with 20 instances of angles sampled i.i.d. from $\mathcal{N}(0,1)$.}\label{fig:quantizers_unit_circle}
\end{figure}
\section{Simulation results}
In a first simulation, presented in Figure \ref{fig:first_simulation}, the proposed variants A and B are simulated for a varying number $N$ of quantization bits. The results show that if $\alpha$ is sufficiently low, which follows with a stable oscillator, Variant B achieves a lower MSE than A even when $T$ is high, however, the difference is negligible for  high $N$. Furthermore, even for low $\alpha$, the quantized phase feedback decreases the MSE significantly. Note that the reciprocity calibration for Variant B requires no additional signaling compared to Variant A, making the comparison fair in terms of resource usage. In the second simulation, presented in Figure \ref{fig:second_simulation}, the MSE for Variant B when $T$ (The number of OAC iterations before re-calibration) grows is approximated. The MSE grows approximately linearly with $T$ for low $\alpha$, but for high $\alpha$ and small $N$ there is a clear non-linear growth of the MSE. This phenomena could be shown analytically by approximating the quantization errors in $E^t$ as uniformly distributed when $\alpha$ is low or $N$ high. All relevant parameters are specified in the figure captions and legends. 
\begin{figure}[t!]
	\centering
	\includegraphics[height=2.2in]{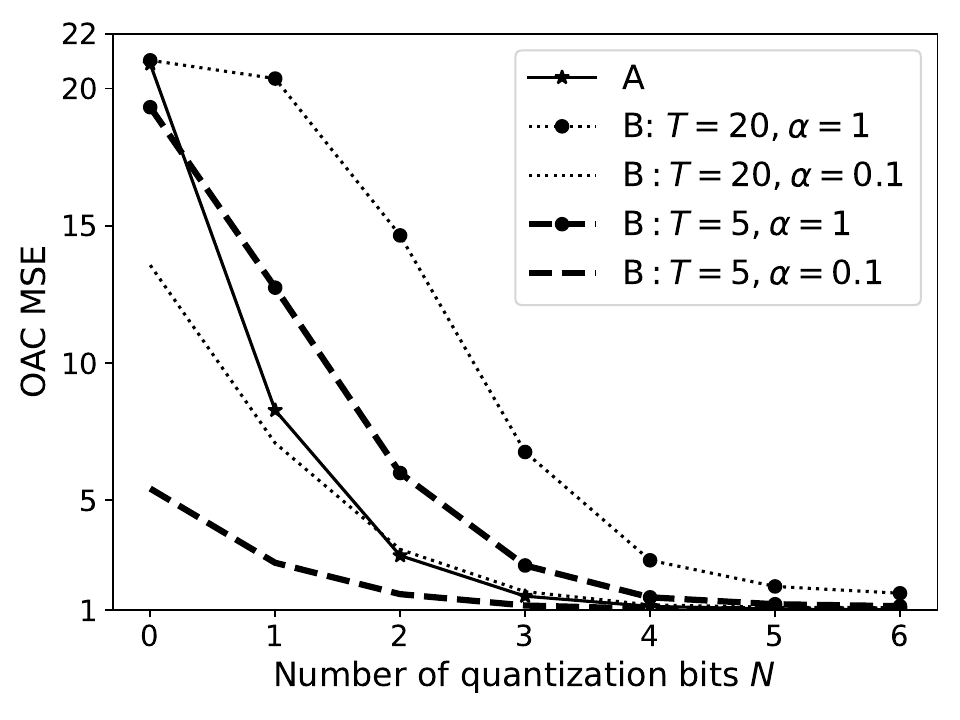}
	\caption{OAC sum estimator (\ref{eq:OAC_estimate}) MSE (\ref{eq:MSE_goal}) against number of quantization bits $N$.  $N=0$ means no phase feedback. $K=10$.  $v_k, g_k, n \sim\mathcal{CN}(0,1)$ i.i.d. $\forall k$, averaged over $10^5$ trials.}\label{fig:first_simulation}
\end{figure}
\begin{figure}[t!]
	\centering
	\includegraphics[height=2.2in]{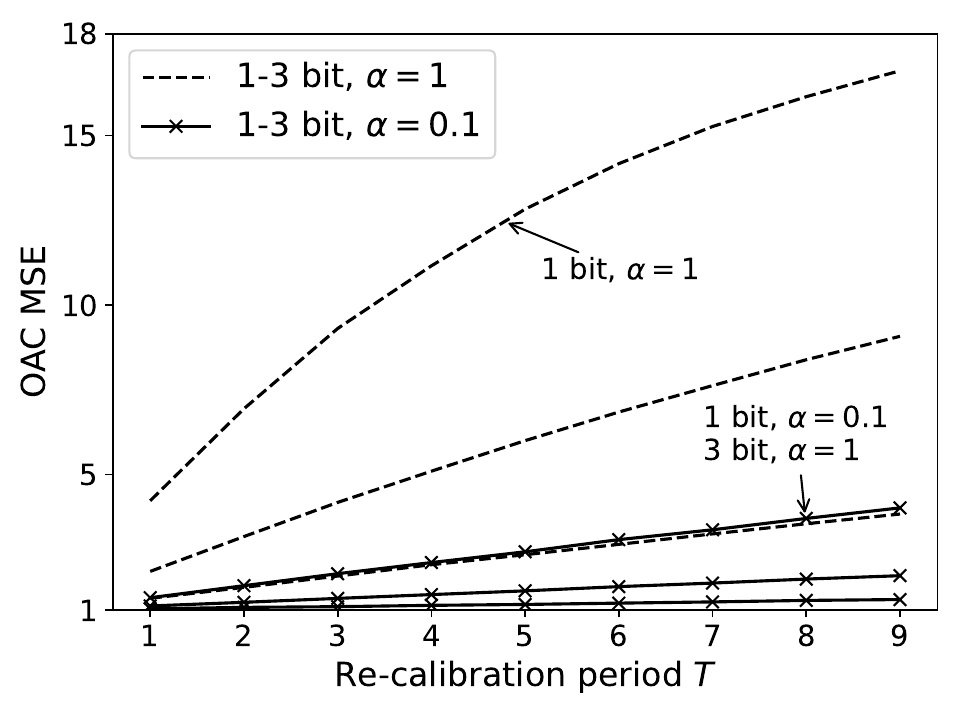}
	\caption{OAC sum estimator (\ref{eq:OAC_estimate}) MSE (\ref{eq:MSE_goal}) against the re-calibration period $T$ of the phase. $K=10$,  $v_k, g_k, n \sim\mathcal{CN}(0,1)$ i.i.d. $\forall k$, averaged over $10^5$ trials.}\label{fig:second_simulation}
\end{figure}
\label{sec:typestyle}

\section{conclusions}
This work proposes a hybrid channel estimation scheme, aiming to reduce the signaling overhead for over-the-air computation. The scheme allows selecting the precision of channel amplitude and phase estimation separately by relying on both calibrated reciprocity-based and feedback-based channel estimation. If phase noise is low, one may estimate both channel amplitude and phase from reciprocity, otherwise channel phase feedback is necessary. Additionally, quantization may be optimized to the phase noise distribution. A future direction is to derive an approximate MSE for the second variant of the proposed scheme.

\bibliographystyle{IEEEbib}
\bibliography{references}

\end{document}